\def\BibTeX{{\rm B\kern-.05em{\sc i\kern-.025em b}\kern-.08em
    T\kern-.1667em\lower.7ex\hbox{E}\kern-.125emX}}
\newcommand{\figref}[1]{Fig.~\ref{#1}}
\newcommand{\diag}{\text{diag}}
\newcommand{\tr}{\text{tr}}
\newtheoremstyle{remarkmod}
  {\topsep}   
  {\topsep}   
  {\normalfont}  
  {0pt}       
  {\itshape} 
  {.}         
  {5pt plus 1pt minus 1pt} 
  {}          
\theoremstyle{remarkmod}
\newtheorem{theorem}{Theorem}
\newtheorem{lemma}{Lemma}
\newtheorem{remark}{\textit{Remark}}
\begin{document}

\title{On the Impact of Oscillator Phase Noise in an IRS-assisted MISO TDD System\\
\thanks{Funded by the Deutsche Forschungsgemeinschaft (DFG, German Research Foundation) under Germany’s Excellence Strategy - EXC 2092 CASA - 390781972.}
}


\author{\IEEEauthorblockN{Chu Li, Aydin Sezgin}
	\IEEEauthorblockA{Ruhr-Universit\"at Bochum, Germany \\
		Email:  \{chu.li, aydin.sezgin\}@rub.de}
	\and
	\IEEEauthorblockN{Zhu Han }
	\IEEEauthorblockA{University of Houston, USA\\
		zhan2@uh.edu}
}

\maketitle

	\begin{abstract}
	
	Intelligent reflecting surfaces (IRS) have great potential for achieving higher spectral and energy efficiency. However, the expected benefits depend strongly on the accuracy of the channel estimation. Most of the current work assumes perfect channel state information, which is impractical in real communication systems. Moreover, state-of-the-art IRS channel estimation algorithms are proposed under the assumption of perfect transceivers. These algorithms cannot be applied in the case of imperfect transceivers. In this work, we propose a novel channel estimation algorithm that takes into account phase noise from the local oscillator, which is the major contributor to the transceiver impairments. More specifically, we estimate the channel from uplink pilots transmission. Utilizing the obtained channel estimates the downlink ergodic rate is analyzed, where we find that the IRS-assisted system becomes more robust to phase noise as the number of reflective elements increases. Additionally, the impact of additive receiver noise in uplink vanishes when the number of reflective elements approaches infinity.
\end{abstract}
\begin{IEEEkeywords}
	Intelligent
	reflecting surfaces (IRS), phase noise, communication rate
\end{IEEEkeywords}

\section{Introduction}
Intelligent reflection surface (IRS) aided wireless communication systems, as a promising technology to improve the spectral and energy efficiency for 5G and beyond networks, has received increasing attention. An IRS is a thin metal plate consisting of passive scattering elements that can be controlled by a low-cost electronic circuit. Recent works have proven that IRS-assisted systems can achieve higher spectral and energy efficiency at a lower cost than other technologies, such as conventional multi-input multi-output (MIMO) and relay-aided systems \cite{8811733,IRS_relay,7342977}.

However, most prior works assume perfect knowledge of channel state information (CSI), which is highly unlikely given in practice. Especially for IRS-aided systems, obtaining accurate CSI is challenging. Unlike conventional transmitters and receivers that can transmit or receive pilot signals, IRS does not have radio resources or signal processing capabilities to estimate the channel. To address this issue, recent works estimate the cascaded channel instead of estimating the BS to IRS channel and the IRS to user channel separately \cite{9053695} \cite{IRSCE1}. More specifically, in \cite{9053695} the least-square (LS) based channel estimation is proposed, while a minimum mean square mean square error (MMSE) based algorithm is applied in \cite{IRSCE1}. However, these works assume perfect transceivers. Yet, in practical systems phase noise from the local oscillator, which is the main contributor to transceiver impairments, has a detrimental effect on system performance. In particular, high-frequency oscillators suffer from strong phase noise  \cite{voicu2013performance}. Therefore, systems operating in the high-frequency range, such as terahertz (THz), are severely impacted by phase noise \cite{9039743,hillger2020toward}. The impact of additive phase noise in an IRS-assisted system is studied in \cite{9477418}. In addition to the additive phase noise, the transceiver also suffers from multiplicative phase noise. Compared to additive phase noise, it can cause more severe degradation of system performance \cite{668721},\cite{schenk2008rf}. Yet, the impact of the multiplicative phase noise has not been studied in an IRS-assisted system.

To fill this gap in research, in this work, we study the impact of
the multiplicative phase noise in an IRS-assisted system. We consider an IRS-assisted system with multiplicative phase noise both at the BS and user. A novel channel estimation algorithm is proposed considering the phase noise. Particularly, we assume that the system operates in time division duplex (TDD) mode, the channel estimates are obtained from uplink pilots transmission. Exploiting the channel reciprocity we investigate the system performance in the downlink, more specifically, we derive the ergodic communication rate in closed form. Simulation results verify the correctness of the closed-form expression. We observe that the system becomes more robust against the phase noise as the number of reflective
elements increases. Moreover, the influence of the additive receiver noise in uplink vanishes as the number of reflecting elements grows asymptotically large.

 The rest of this paper is organized as follows. Sec.~\ref{sec:sysmod} describes the system model. In Sec.~\ref{sec:CE_rate} we propose the channel estimation algorithm and analyze the downlink rate. Simulation results are provided in Sec.~\ref{sec:results}. Finally, Sec.~\ref{sec:conclusion} concludes the paper. 

$\mathit{Notation:}$  Boldface lower and upper case symbols are used to denote the vectors and matrices, respectively. ${(\cdot)}^T$, $(\cdot)^*$ and $(\cdot)^H$  represent the transpose, conjugate and Hermitian transpose operators. We use $\mathbb{E}\left[ \cdot\right]  $ to denote the expectation operator, $\diag{(\mathbf{a})}$ is a diagonal matrix with vector $\mathbf{a}$ on its main diagonal, and $\mathbf{I}_N$ ia a $N \times N$ identity matrix. $\tr{(\mathbf{X})}$, $ \left\|  \mathbf{X}\right\| $ and $vec{(\mathbf{X})}$ denote the trace, norm  and vectorization with respect to the matrix $\mathbf{X}$. $\otimes$ is the Kronecker product.



%
\section{System Model}
\label{sec:sysmod}

\begin{figure}
	\centering
	\includegraphics[width=0.8\linewidth]{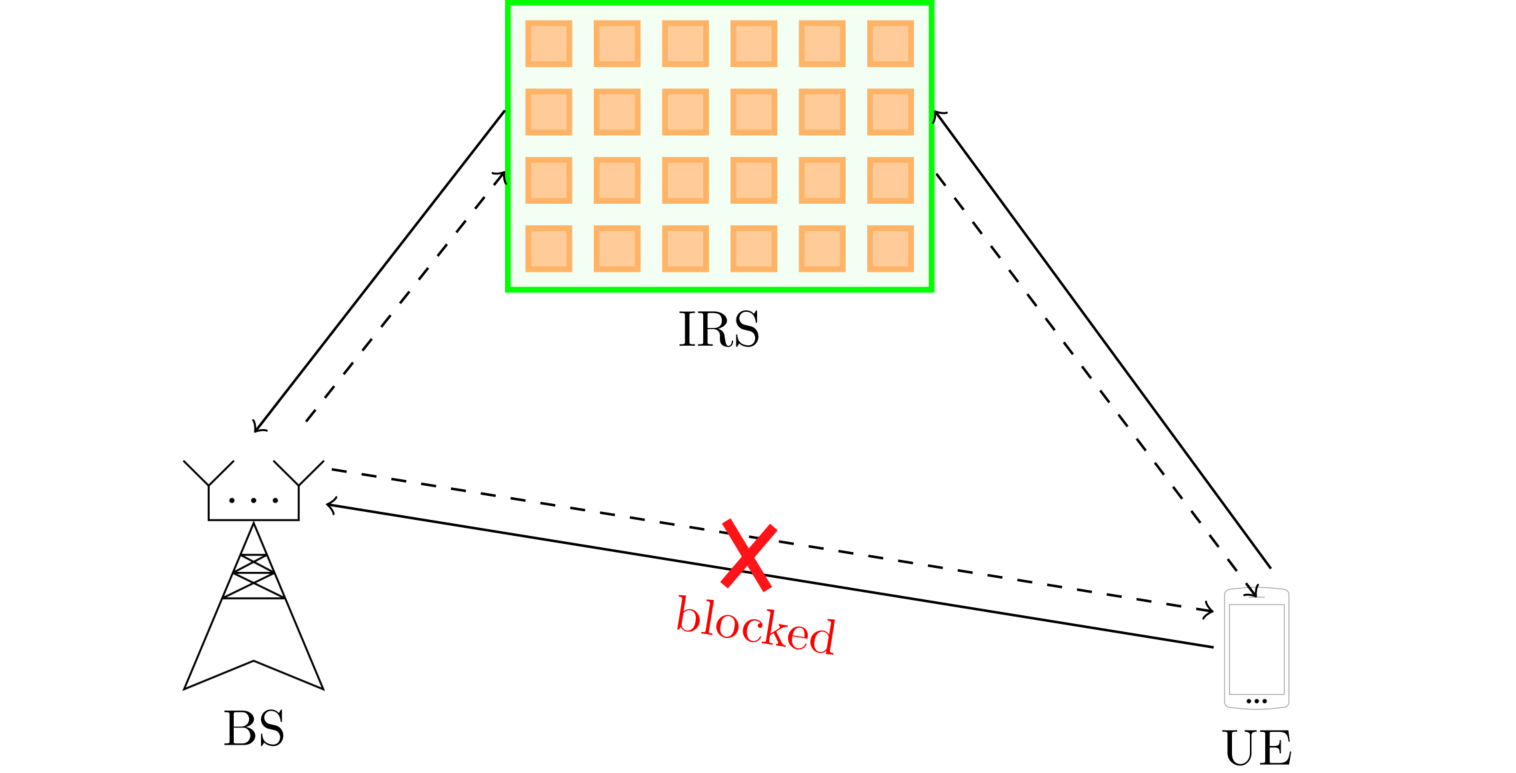}
	\caption{{IRS-assisted single user system: downlink channel (dashed lines) and uplink channel (solid lines)}}
	\label{fig:overview} \vspace{-1.5em}
\end{figure}

We consider a multiple-input single-output (MISO) system in which a BS with $M$ antennas communicates with a user with a single antenna by means of an IRS composed of $N$ reflective elements, as illustrated in \figref{fig:overview}. The phase shift induced by the IRS is controlled by a micro-controller connected to the BS. In addition, we assume that the channel between the BS and the user is blocked by obstacles, which is mostly the case in practice \cite{basar2019wireless}. In this work, a block-fading channel is assumed. We use $T$ to denote the channel coherence time. Furthermore, the communication system operates in TDD mode. We use $[\tau_{1},...,\tau_{B}] \in T$ and $\mathcal{D} \triangleq\{1, \ldots, T\} \backslash\left\{\tau_{1}, \ldots, \tau_{B}\right\}$ to denote the time indices for uplink pilots and downlink data transmission, respectively.
The received signal at the user is modeled as 
\begin{align}
\label{eq:dl1}
y_{t}^{\mathrm{dl}}&= {\left(\mathbf{D}_t\mathbf{H}_{1} \diag\left( \boldsymbol{\phi}_{t}^{\mathrm{dl}}\right)  \mathbf{h}_{2} \right)}^T   \boldsymbol{w}_t \varsigma_t +n_{t}^{\mathrm{dl}} \nonumber \\
&= {\left(\mathbf{D}_t\mathbf{H}_{cas}\boldsymbol{\phi}_{t}^{\mathrm{dl}} \right)}^T   \boldsymbol{w}_t \varsigma_t +n_{t}^{\mathrm{dl}},  \quad t \in \mathcal{D},
\end{align} 
	where $\mathbf{D}_{t}=\diag{\left(e^{j{\theta}_{t,1}}, e^{j{\theta}_{t,2}},  \cdots,  e^{j{\theta}_{t,M}}  \right)}$ denotes the multiplicative phase-drifts caused by the imperfect local oscillators at BS and user. Also, ${\theta}_{t,m} = {\theta}^{BS}_{t,m} + {\theta}^{UE}_{t} $, where $ {\theta}^{BS}_{t,m}$ and ${\theta}^{UE}_{t}$ are, respectively, the phase-drifts at the $m$-th BS antenna and the user. Furthermore, ${\theta}^{BS}_{t,m} $  and $ {\theta}^{UE}_{t}$ are modeled as a discrete-time independent Wiener process \cite{schenk2008rf}
\begin{align}
\label{eq:phasenoise}
{\theta}^{BS}_{t,m} &= {\theta}^{BS}_{t-1,m}+\Delta  {\theta}^{BS}_{m}, \\	 
{\theta}^{UE}_{t} &= {\theta}^{UE}_{t-1}+ \Delta  {\theta}^{UE},	
\end{align}
where $\Delta  {\theta}^{BS}_{t,m} \sim \mathcal{N}\left(0, \sigma_{BS,m}^{2}\right)$ and $\Delta  {\theta}^{UE}_{t} \sim \mathcal{N}\left(0, \sigma_{UE}^{2}\right)$ denote the random phase increment caused by the imperfect local oscillator at the BS and user, respectively. We assume that each antenna of the BS has its own local oscillator. Besides, we let $\sigma_{BS,m}^{2}=\sigma_{BS}^{2}$ for each oscillator for simplicity. The phase noise variance is given by $\sigma_{BS/UE}^{2} = 4 \pi^2 f_{c}^2 T_s \zeta_{BS/UE}$, where $f_c$, $T_s$, $\zeta_{BS/UE}$ are the carrier frequency, the symbol interval and a constant that depends on the quality of the local oscillator, respectively. 
In addition, $\mathbf{H}_{cas} \in \mathbb{C}^{M \times N}$ represents the IRS-assisted channel given as $\mathbf{H}_{cas} =\mathbf{H}_{1} \diag(\mathbf{h}_{2}) $, where $\mathbf{H}_{1} \in \mathbb{C}^{M \times N} $ and $\mathbf{h}_{2} \in \mathbb{C}^{N}$ denote the BS to IRS and IRS to user channel, respectively. We model the entries of $\mathbf{H}_{cas}$ as independent and identically distributed complex circularly symmetric Gaussian variables with variance ${\beta}_{cas}$.    
Moreover, we use $\boldsymbol{\phi}_{t}^{\mathrm{dl}} \in \mathbb{C}^{N} $ to denote the phase shifts vector at the IRS during the downlink transmission at time $t$, and $\boldsymbol{w}_t \in {\mathbb{C}}^{M} $ is the precoding vector designed according to the channel estimates, which will be introduced in the following section.  $ \varsigma_t \in {\mathbb{C}} $ is the transmit symbol with power constraint $\mathbb{E}\left[  |\varsigma_t| ^2\right]   = P $, and $n_{t}^{\mathrm{DL}} \sim \mathcal{CN}(0, {\sigma}_{\mathrm{d}}^2) $ is the additive complex Gaussian noise in the downlink. Similarly, we model the received signal at the BS as
\begin{align}
\label{eq:up1}
\mathbf{y}_{t}^{\mathrm{ul}}
&= {\left(\mathbf{D}_t\mathbf{H}_{cas}\boldsymbol{\phi}_{t}^{\mathrm{ul}} \right)}  x_t +\mathbf{n}_{t}^{\mathrm{ul}}, \quad t \in [\tau_{1},...,\tau_{B}] , 
\end{align} 
where $\boldsymbol{\phi}_{t}^{\mathrm{ul}} \in \mathbb{C}^{N} $ denotes the phase shifts vector at the IRS during the uplink pilots transmission at time $t$, and $x_t$ is the pilots signal with  $\mathbb{E}\left[  |x_t|^2\right]   = 1 $ . Lastly,  $\mathbf{n}_{t}^{\mathrm{ul}} \sim \mathcal{CN}(0, {\sigma}_{\mathrm{u}}^2 \mathbf{I}_{M})  $ is the additive receiver noise at the BS with covariance $ {\sigma}_{\mathrm{u}}^2\mathbf{I}_{M}$.

\section{Channel estimation scheme and downlink analysis}
\label{sec:CE_rate}
In this section, we introduce the MMSE channel estimation algorithm in Sec. \ref{sec:CE}. The obtained channel estimate is utilized to analyze the downlink performance in Sec. \ref{sec:rate}.
 
\subsection{Channel Estimation}
\label{sec:CE}
To study the impact of the phase noise on the system performance, we propose the MMSE based channel estimation algorithm, by which the channel and the phase noise are jointly estimated. Exploiting the channel reciprocity in the TDD mode, we
obtain the downlink channel from the uplink pilots signal. Since the pilot transmission is corrupted by the random phase drifts caused by the imperfect local oscillator, the conventional IRS channel estimation algorithms in \cite{8683663,9053695,IRSCE1} cannot be applied. To solve this problem, we propose an MMSE channel estimator considering the phase noise in the following.

We rewrite the received signal at the BS \eqref{eq:up1} as
\begin{align}
\label{eq:up2}
\mathbf{y}_{t}^{\mathrm{ul}}
&= (  {(\boldsymbol{\phi}_{t}^{ul})}^T  \otimes \mathbf{I}_{M} ) \left(\mathbf{I}_{N} \otimes \mathbf{D}_{t}\right) \mathbf{h} x_t+\mathbf{n}_{t} \nonumber \\
&= (  {(\boldsymbol{\phi}_{t}^{ul})}^T  \otimes \mathbf{I}_{M} ) \tilde{\mathbf{h}}_t x_t+\mathbf{n}_{t}, 
\end{align} 
where ${\mathbf{h}} = vec(\mathbf{H}_{cas})$ is a ${MN}$ dimensional vector, and we define $ \tilde{\mathbf{h}} = \left(\mathbf{I}_{N} \otimes \mathbf{D}_{t}\right) \mathbf{h} $ as the effective channel. Note that the real channel $\mathbf{h}$ is constant with the coherence time, while the effective channel $\tilde {\mathbf{h}}$ is, in contrast, time variant due to the random phase noise. During the uplink pilots transmission, we design the IRS phase shifts vector $\boldsymbol{\phi}_{\tau_i}^{ul}$ as the $i$-th row of a $B \times N$ DFT matrix $\boldsymbol{\Phi}$, which is shown to be the optimal design of the IRS during uplink channel estimation \cite{8683663,9053695,IRSCE1}. Next, we introduce the MMSE estimator of $\tilde{\mathbf{h}}_t$.

	\begin{lemma}
	\label{lemma:1}
	Given the combined uplink pilots signal $\boldsymbol{\psi} \triangleq\left[\mathbf{y}^{\mathrm{T}}_{\tau_{1}} \ldots \mathbf{y}^{\mathrm{T}}_{\tau_{B}}\right]^{\mathrm{T}} \in \mathbb{C}^{B M } 
	$, the MMSE channel estimate of the effective channel $\hat{{\mathbf{h}}}_{t}$ is given by 
	\begin{align}
	\hat{{\mathbf{h}}}_{t} =  \frac{\beta_{cas}}{N\beta_{cas}+\sigma_{\mathrm{u}}^2} \left( (   {\boldsymbol{\Phi}}^H \tilde{\mathbf{D}} )    \otimes \mathbf{I}_{M}\right) \boldsymbol{\psi},
	\end{align}
	where
	\begin{align}
	\label{eq:Dtilde}
	\tilde{\mathbf{D}}=\diag\left(x^{*}_{{\tau}_1} e^{-\frac{\sigma_{BS}^{2}+\sigma_{UE}^{2}}{2}\left|t-\tau_{1}\right|}  \ldots x^{*}_{{\tau}_B} e^{-\frac{\sigma_{BS}^{2}+\sigma_{UE}^{2}}{2}\left|t-\tau_{B}\right|}  \right).
	\end{align}	
	The corresponding covariance matrix of the channel estimates is given as
	\begin{align}
	\label{eq:covch}
	\mathbf{\Psi}_{t}&=\mathbb{E}\left[\hat{\mathbf{h}}_{t} \hat{\mathbf{h}}_{t}\right] \nonumber \\
	&= \frac{\beta_{cas}^2}{N\beta_{cas}+\sigma_{\mathrm{u}}^2} \left({\boldsymbol{\Phi}}^H \tilde{\mathbf{D}} \tilde{\mathbf{D}}^H {\boldsymbol{\Phi}} \right)\otimes \mathbf{I}_{M},
	\end{align}
	while the estimation error covariance matrix is
	\begin{align}
	\label{eq:coverr}
	\mathbf{C}_{t} &=\mathbb{E}\left[\left(\tilde{\mathbf{h}}_{t}-\hat{\mathbf{h}}_{t}\right)\left(\tilde{\mathbf{h}}_{t}-\hat{\mathbf{h}}_{t}\right)^{\mathrm{H}}\right] \nonumber \\
	&= \beta_{cas} \mathbf{I}_{MN}- \frac{\beta_{cas}^2}{N\beta_{cas}+\sigma_{\mathrm{u}}^2} \left({\boldsymbol{\Phi}}^H \tilde{\mathbf{D}} \tilde{\mathbf{D}}^H {\boldsymbol{\Phi}} \right)\otimes \mathbf{I}_{M}.
	\end{align}
	
\end{lemma}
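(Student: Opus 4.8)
The plan is to evaluate the MMSE estimator via its linear Wiener form $\hat{\mathbf{h}}_{t} = \mathbf{R}_{\tilde{h}\psi}\mathbf{R}_{\psi}^{-1}\boldsymbol{\psi}$, with $\mathbf{R}_{\tilde{h}\psi} = \expv[\tilde{\mathbf{h}}_{t}\boldsymbol{\psi}^{H}]$ and $\mathbf{R}_{\psi} = \expv[\boldsymbol{\psi}\boldsymbol{\psi}^{H}]$. Because the channel $\mathbf{h}$, the phase drifts, and the receiver noise are mutually independent, and $\mathbf{h}$ has zero mean with $\expv[\mathbf{h}\mathbf{h}^{H}] = \beta_{cas}\mathbf{I}_{MN}$, the estimator is fully determined by these two second-order statistics. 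First I would collapse the two Kronecker factors in \eqref{eq:up2} using the mixed-product rule $(\mathbf{A}\otimes\mathbf{B})(\mathbf{C}\otimes\mathbf{D}) = (\mathbf{AC})\otimes(\mathbf{BD})$, rewriting each pilot observation as $\mathbf{y}_{\tau_i}^{\mathrm{ul}} = ((\boldsymbol{\phi}_{\tau_i}^{ul})^{T}\otimes\mathbf{D}_{\tau_i})\mathbf{h}\,x_{\tau_i} + \mathbf{n}_{\tau_i}$, which isolates the phase-drift matrix $\mathbf{D}_{\tau_i}$ as the right Kronecker factor.

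Next I would assemble $\mathbf{R}_{\psi}$ block by block. Averaging the $(i,j)$ block over $\mathbf{h}$ yields $\beta_{cas}\,x_{\tau_i}x_{\tau_j}^{*}\,\big( (\boldsymbol{\phi}_{\tau_i}^{ul})^{T}(\boldsymbol{\phi}_{\tau_j}^{ul})^{*} \big)\,\mathbf{D}_{\tau_i}\mathbf{D}_{\tau_j}^{H}$ plus the diagonal noise contribution $\delta_{ij}\sigma_{\mathrm{u}}^{2}\mathbf{I}_{M}$. The DFT orthogonality of the pilot design makes the scalar $(\boldsymbol{\phi}_{\tau_i}^{ul})^{T}(\boldsymbol{\phi}_{\tau_j}^{ul})^{*}$ equal to $N\delta_{ij}$, which annihilates every off-diagonal block; on the diagonal $\mathbf{D}_{\tau_i}\mathbf{D}_{\tau_i}^{H} = \mathbf{I}_{M}$ and $|x_{\tau_i}|^{2} = 1$, so $\mathbf{R}_{\psi} = (N\beta_{cas} + \sigma_{\mathrm{u}}^{2})\mathbf{I}_{BM}$ and its inverse is a scalar multiple of the identity. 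This is exactly what produces the common prefactor in the lemma.

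The key step is the cross-covariance. Its $i$-th block $\expv[\tilde{\mathbf{h}}_{t}\mathbf{y}_{\tau_i}^{H}]$ reduces, after averaging over $\mathbf{h}$, to $\beta_{cas}\,x_{\tau_i}^{*}\,\big( (\boldsymbol{\phi}_{\tau_i}^{ul})^{*}\otimes\expv[\mathbf{D}_{t}\mathbf{D}_{\tau_i}^{H}] \big)$, so I must evaluate the phase-noise correlation $\expv[\mathbf{D}_{t}\mathbf{D}_{\tau_i}^{H}]$. Each diagonal entry is $\expv[e^{j(\theta_{t,m}-\theta_{\tau_i,m})}]$, and since $\theta_{t,m}-\theta_{\tau_i,m}$ is a sum of $|t-\tau_i|$ independent zero-mean Gaussian increments of total variance $(\sigma_{BS}^{2}+\sigma_{UE}^{2})|t-\tau_i|$, the Gaussian characteristic function gives $\expv[\mathbf{D}_{t}\mathbf{D}_{\tau_i}^{H}] = e^{-\frac{\sigma_{BS}^{2}+\sigma_{UE}^{2}}{2}|t-\tau_i|}\mathbf{I}_{M}$. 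This decay is precisely the diagonal of $\tilde{\mathbf{D}}$; stacking the blocks and matching columns identifies $\mathbf{R}_{\tilde{h}\psi} = \beta_{cas}\big( (\boldsymbol{\Phi}^{H}\tilde{\mathbf{D}})\otimes\mathbf{I}_{M} \big)$, and multiplying by $\mathbf{R}_{\psi}^{-1}$ gives the claimed $\hat{\mathbf{h}}_{t}$. I expect this phase-noise expectation, together with keeping the Kronecker and DFT bookkeeping consistent, to be the main obstacle.

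Finally, the two covariance matrices follow mechanically. For $\mathbf{\Psi}_{t} = \expv[\hat{\mathbf{h}}_{t}\hat{\mathbf{h}}_{t}^{H}] = \mathbf{R}_{\tilde{h}\psi}\mathbf{R}_{\psi}^{-1}\mathbf{R}_{\tilde{h}\psi}^{H}$ I would substitute the scalar inverse and use $(\mathbf{A}\otimes\mathbf{I}_{M})(\mathbf{B}\otimes\mathbf{I}_{M}) = (\mathbf{AB})\otimes\mathbf{I}_{M}$ to obtain \eqref{eq:covch}. For the error covariance I would invoke the orthogonality principle, $\mathbf{C}_{t} = \expv[\tilde{\mathbf{h}}_{t}\tilde{\mathbf{h}}_{t}^{H}] - \mathbf{\Psi}_{t}$, and compute $\expv[\tilde{\mathbf{h}}_{t}\tilde{\mathbf{h}}_{t}^{H}] = \beta_{cas}\expv[\mathbf{I}_{N}\otimes(\mathbf{D}_{t}\mathbf{D}_{t}^{H})] = \beta_{cas}\mathbf{I}_{MN}$ using $\mathbf{D}_{t}\mathbf{D}_{t}^{H} = \mathbf{I}_{M}$, which yields \eqref{eq:coverr}.
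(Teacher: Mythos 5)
Your proposal is correct and follows essentially the same route as the paper's proof: both start from the linear MMSE form $\hat{\mathbf{h}}_t = \expv[\tilde{\mathbf{h}}_t\boldsymbol{\psi}^H](\expv[\boldsymbol{\psi}\boldsymbol{\psi}^H])^{-1}\boldsymbol{\psi}$, evaluate the cross-covariance via the Wiener-process characteristic function $\expv[e^{j(\theta_{t,m}-\theta_{\tau_i,m})}] = e^{-\frac{\sigma_{BS}^2+\sigma_{UE}^2}{2}|t-\tau_i|}$, collapse $\expv[\boldsymbol{\psi}\boldsymbol{\psi}^H]$ to $(N\beta_{cas}+\sigma_u^2)\mathbf{I}_{BM}$ using DFT orthogonality, and obtain $\mathbf{C}_t$ from the orthogonality principle. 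Your block-by-block bookkeeping (after applying the mixed-product rule to each pilot observation) is merely an organizational variant of the paper's full-matrix computation, and your explicit remark that DFT orthogonality annihilates the off-diagonal blocks regardless of the phase-drift cross-terms is a point the paper glosses over.
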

\begin{proof}
	The proof is provided in Appendix \ref{sec:A1}.
\end{proof}

\subsection{Downlink Performance Analysis}
\label{sec:rate}
Utilizing the channel estimates proposed in Sec. \ref{sec:CE}, we investigate the impact of the phase noise on the downlink performance. According to \eqref{eq:dl1}, we define the received instantaneous SNR as
\begin{align}
\label{eq:SNR_ins}
\gamma_{t}=\frac{P}{\sigma_{d}^{2}}\left|\left(\mathbf{D}_t {\mathbf{H}}_{cas} \boldsymbol{\phi}_{t}^{\mathrm{dl}}\right)^{T} \boldsymbol{w}_{t}\right|^{2},
\end{align}	
where the precoding vector $\boldsymbol{w}_{t}$ depends on the channel estimates. More specifically, we design the precoding vector under maximum ratio transmission (MRT), which is given as 
\begin{align}
\label{eq:precoding}
\boldsymbol{w}_{t} =\frac{{\left(\hat{\mathbf{H}}_{c a s, t} \boldsymbol{\phi}_{t}^{\mathrm{dl}} \right) }^*}{\left(\mathbf{E}\left[\left\|\hat{\mathbf{H}}_{c a s, t} \boldsymbol{\phi}_{t}^{\mathrm{dl}}\right\|^{2}\right]\right)^{\frac{1}{2}}} .
\end{align} 
Here, we normalize the precoding vector over the average of many channel realizations for analytical tractability. Next, we optimize the IRS by maximizing the received SNR given in \eqref{eq:SNR_ins}. Thus, the optimal IRS is given as \cite{8683663,IRSCE1} 
\begin{align}
\label{eq:optIRS}
\boldsymbol{\phi}_{t,opt}^{\mathrm{DL}} =\exp \left(j \angle\left(\tilde{\mathbf{H}}_{cas,t}^{H} \mathbbm{1}_N\right)\right).
\end{align}
Plugging  \eqref{eq:precoding} and \eqref{eq:optIRS} into \eqref{eq:SNR_ins}, the instantaneous SNR can be observed under the optimized IRS. We now define the ergodic communication rate as  
\begin{align}
\label{eq:capacity}
R=\frac{1}{T} \sum_{t \in D} \log _{2}\left(1+\mathbb{E}\left[\gamma_{t}\right]\right).
\end{align}
We notice that the ergodic rate depends on the averaged SNR $\mathbb{E}\left[\gamma_{t}\right]$, which can be naively computed by taking the average over many instantaneous SNR.  However, it leads to high computational complexity, particularly when the communication system has a large number of antennas at the BS or reflective elements at the IRS. To overcome this issue,  we present the averaged SNR in a closed-form in the following theorem. 
\begin{theorem}
	\label{theorem}
	Given the MMSE channel estimates proposed in Sec. \ref{sec:rate} , the averaged SNR with random IRS is given by
	\begin{align}
	\label{eq:thsnr_nodirect}
	&{\bar{\gamma}}_{t,a}
	=	\frac{P}{\sigma_{d}^{2} }\left( \left( M-1\right) N\eta +N\beta_{cas}\right), 
	\end{align}
	while the averaged SNR with optimized IRS is given by
	\begin{align}
	\label{eq:opt}
	{\bar{\gamma}}_{t,opt} =  \frac{P}{\sigma_{d}^{2}} \left( \left((M-1)+ \frac{N \pi}{4}-1\right) N\eta + N\beta_{cas}\right),
	\end{align}
	where $\eta = \frac{(\beta_{cas})^2}{N\beta_{cas}+\sigma_{u}^{2}}\sum_{i=1}^{N}e^{-\left( {\sigma_{B S}^{2}+\sigma_{U E}^{2}}\right) (t-i)}$.
\end{theorem}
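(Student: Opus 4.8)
The plan is to reduce the averaged SNR to second‑order moments of the MMSE quantities of Lemma~\ref{lemma:1}, and then to evaluate those moments separately for the random and the optimized IRS. First I would introduce the true and estimated downlink effective channels $\mathbf{g}_t = \mathbf{D}_t\mathbf{H}_{cas}\boldsymbol{\phi}_t^{\mathrm{dl}}$ and $\hat{\mathbf{g}}_t = \hat{\mathbf{H}}_{cas,t}\boldsymbol{\phi}_t^{\mathrm{dl}}$, and rewrite them through the vectorized channels of Lemma~\ref{lemma:1} via the identity $\mathbf{D}_t\mathbf{H}_{cas}\boldsymbol{\phi} = (\boldsymbol{\phi}^{T}\otimes \mathbf{I}_{M})\,\tilde{\mathbf{h}}_t$. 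Using the MMSE orthogonality principle I split $\tilde{\mathbf{h}}_t = \hat{\mathbf{h}}_t + \mathbf{e}_t$ with $\hat{\mathbf{h}}_t\perp\mathbf{e}_t$, both zero‑mean Gaussian. Exploiting the Kronecker structure $\mathbf{\Psi}_t = \mathbf{A}\otimes \mathbf{I}_M$ and $\mathbf{C}_t = (\beta_{cas}\mathbf{I}_N-\mathbf{A})\otimes\mathbf{I}_M$, with $\mathbf{A} = \tfrac{\beta_{cas}^2}{N\beta_{cas}+\sigma_{\mathrm{u}}^2}\boldsymbol{\Phi}^H\tilde{\mathbf{D}}\tilde{\mathbf{D}}^H\boldsymbol{\Phi}$, the projections $\hat{\mathbf{g}}_t$ and $\mathbf{e}_t'=(\boldsymbol{\phi}^{T}\otimes\mathbf{I}_M)\mathbf{e}_t$ become independent and isotropic with covariances $a\,\mathbf{I}_M$ and $(N\beta_{cas}-a)\mathbf{I}_M$, where $a = (\boldsymbol{\phi}_t^{\mathrm{dl}})^{T}\mathbf{A}(\boldsymbol{\phi}_t^{\mathrm{dl}})^{*}$.

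Next I would substitute the MRT precoder so that $\gamma_t = \tfrac{P}{\sigma_{d}^{2}}\,|\mathbf{g}_t^{T}\hat{\mathbf{g}}_t^{*}|^2/\|\hat{\mathbf{g}}_t\|^2$, and write $\mathbf{g}_t^{T}\hat{\mathbf{g}}_t^{*} = \|\hat{\mathbf{g}}_t\|^2 + \mathbf{e}_t'^{T}\hat{\mathbf{g}}_t^{*}$. Taking the expectation over the estimation error conditioned on the estimate (and on the IRS), the cross term vanishes because $\mathbf{e}_t'$ is zero‑mean, while the quadratic error term contributes $\hat{\mathbf{g}}_t^{H}\,\mathbb{E}[\mathbf{e}_t'\mathbf{e}_t'^{H}]\,\hat{\mathbf{g}}_t/\|\hat{\mathbf{g}}_t\|^2 = N\beta_{cas}-a$. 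Hence the conditional averaged SNR collapses to the single clean expression $\tfrac{P}{\sigma_{d}^{2}}\big(\|\hat{\mathbf{g}}_t\|^2 + N\beta_{cas} - a\big)$, and it remains only to average $\|\hat{\mathbf{g}}_t\|^2$ and $a$ over the channel estimate and the IRS phases. The $(M-1)$ in the claim is exactly this $\|\hat{\mathbf{g}}_t\|^2$ gain minus the $-a$ piece carried by the error projection.

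For the random IRS I would average over the independent unit‑modulus phases. Since $\mathbb{E}[\phi_n\phi_{n'}^{*}]=\delta_{nn'}$, all off‑diagonal entries of $\mathbf{A}$ and of $\hat{\mathbf{H}}_{cas}^{H}\hat{\mathbf{H}}_{cas}$ drop out, leaving $\mathbb{E}[a]=\tr(\mathbf{A})=N\eta$ and $\mathbb{E}[\|\hat{\mathbf{g}}_t\|^2]=M\tr(\mathbf{A})=MN\eta$. The key fact to verify here is that the diagonal of $\mathbf{A}$ is constant and equal to $\eta$: because $\boldsymbol{\Phi}$ is a DFT matrix its columns are unit‑modulus, so $[\mathbf{A}]_{nn}=\tfrac{\beta_{cas}^2}{N\beta_{cas}+\sigma_{\mathrm{u}}^2}\sum_{i}[\tilde{\mathbf{D}}\tilde{\mathbf{D}}^H]_{ii}=\tfrac{\beta_{cas}^2}{N\beta_{cas}+\sigma_{\mathrm{u}}^2}\sum_{i}e^{-(\sigma_{BS}^{2}+\sigma_{UE}^{2})(t-i)}=\eta$. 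Inserting these into the conditional expression gives $\bar{\gamma}_{t,a}=\tfrac{P}{\sigma_{d}^{2}}\big((M-1)N\eta+N\beta_{cas}\big)$.

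For the optimized IRS the phase vector is aligned to the cascaded channel, so it is correlated with $\hat{\mathbf{H}}_{cas}$ and the averaging is no longer a matter of killing off‑diagonals. The main obstacle is the cross terms in $\mathbb{E}[\|\hat{\mathbf{g}}_t\|^2]=\mathbb{E}\big[\sum_n\|\hat{\mathbf{h}}_{cas,n}\|^2\big]+\sum_{n\neq n'}\mathbb{E}\big[\phi_n\phi_{n'}^{*}\,\hat{\mathbf{h}}_{cas,n'}^{H}\hat{\mathbf{h}}_{cas,n}\big]$: under the alignment $\phi_n=\exp(-j\angle d_n)$ with $d_n=\mathbbm{1}_M^{T}\hat{\mathbf{h}}_{cas,n}$, each cross term no longer averages to zero but to $(\mathbb{E}[|d_n|])^2/M$, and since $d_n$ is Rayleigh one has $(\mathbb{E}[|d_n|])^2=\tfrac{\pi}{4}\,\mathbb{E}[|d_n|^2]$ — which is precisely the origin of the $\tfrac{\pi}{4}$ factor. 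Summing the coherently combined elements then produces the extra beamforming gain $(\tfrac{N\pi}{4}-1)N\eta$ on top of the random‑IRS value, yielding $\bar{\gamma}_{t,opt}$. I expect this final step — carefully tracking the phase‑aligned cross‑correlations, including the residual correlation between distinct IRS columns carried by the off‑diagonal entries of $\mathbf{A}$ — to be the most delicate part of the argument, since it is what pins down the exact lower‑order coefficient multiplying $N\eta$.
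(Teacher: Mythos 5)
Your overall strategy parallels the paper's own proof: split the effective channel into the MMSE estimate plus an orthogonal error, reduce the averaged SNR to second-order moments of the two parts, and extract the $\tfrac{\pi}{4}$ factor for the optimized IRS from Rayleigh-distributed magnitudes. Your treatment of the estimate/error covariances via the Kronecker structure, the identity $[\mathbf{A}]_{nn}=\eta$, and the cross-term value $(\mathbb{E}[|d_n|])^2/M=\tfrac{\pi}{4}\eta$ are all correct. One structural difference is worth flagging: you normalize the MRT precoder by the \emph{instantaneous} norm $\|\hat{\mathbf{g}}_t\|^2$, whereas the paper's precoder \eqref{eq:precoding} is normalized by the ensemble average $\mathbb{E}\bigl[\|\hat{\mathbf{H}}_{cas,t}\boldsymbol{\phi}_t^{\mathrm{dl}}\|^2\bigr]$ (explicitly chosen ``for analytical tractability''). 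You therefore compute $\mathbb{E}\bigl[|\mathbf{g}_t^T\hat{\mathbf{g}}_t^*|^2/\|\hat{\mathbf{g}}_t\|^2\bigr]$ while the paper computes $\mathbb{E}\bigl[|\mathbf{g}_t^T\hat{\mathbf{g}}_t^*|^2\bigr]/\mathbb{E}\bigl[\|\hat{\mathbf{g}}_t\|^2\bigr]$. For the random IRS these happen to coincide with \eqref{eq:thsnr_nodirect}: your conditional collapse to $\|\hat{\mathbf{g}}_t\|^2+N\beta_{cas}-a$ is exact and clean, while the paper gets the same answer only because it evaluates $\mathbb{E}[\|\hat{\mathbf{g}}_t\|^4]$ as $(MN\eta)^2$, dropping the $M(N\eta)^2$ fluctuation term; had you kept the paper's normalization with exact Gaussian moments you would have obtained $MN\eta+N\beta_{cas}$ instead of $(M-1)N\eta+N\beta_{cas}$. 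So the agreement with the stated random-IRS formula rests on your having (silently) changed the precoder convention, and you should say so.

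The genuine gap is in the optimized-IRS case. Your own (correct) computation gives $\mathbb{E}[\|\hat{\mathbf{g}}_t\|^2]=MN\eta+N(N-1)\tfrac{\pi}{4}\eta$ exactly — the $MN\eta$ diagonal plus $N(N-1)$ cross terms of $\tfrac{\pi}{4}\eta$ each. Feeding this into your conditional expression yields $\bar{\gamma}_{t,opt}=\tfrac{P}{\sigma_d^2}\bigl((M-1)N\eta+\tfrac{(N-1)\pi}{4}N\eta+N\beta_{cas}\bigr)$, which is \emph{not} the stated \eqref{eq:opt}: the theorem has the coefficient $\tfrac{N\pi}{4}-1$ where you get $\tfrac{(N-1)\pi}{4}$, a discrepancy of $N\eta\bigl(1-\tfrac{\pi}{4}\bigr)$. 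The stated formula inherits an approximation made in the paper's derivation \eqref{eq:B2_opt}, where the exact quantity $\mathbb{E}\bigl[(\sum_{n}|\hat{\mathbf{h}}_{cas,t}(n)|)^2\bigr]=N\eta+N(N-1)\tfrac{\pi}{4}\eta$ is replaced by $N^2\tfrac{\pi}{4}\eta$, i.e., the $N$ self-terms are treated as if they also carried the $\tfrac{\pi}{4}$ factor. (Note this is not caused by your different alignment choice $\phi_n=\exp(-j\angle\,\mathbbm{1}_M^T\hat{\mathbf{h}}_{cas,n})$ versus the paper's single-row alignment — both give the same exact second moment.) Consequently your final sentence, asserting that the cross-term bookkeeping ``yields $\bar{\gamma}_{t,opt}$,'' is unsubstantiated: carried out exactly as you outline it, the calculation lands on a different constant. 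To close the proof of the statement as written you must either adopt the paper's approximation explicitly, or record that your exact result and \eqref{eq:opt} differ by the lower-order term $N\eta\bigl(1-\tfrac{\pi}{4}\bigr)$.
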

\begin{proof}
	The proof is provided in Appendix \ref{sec:A2}.
\end{proof}
%
\noindent It is easy to see that under perfect CSI, i.e., $\sigma_{B S}^{2}=\sigma_{U E}^{2}=0$ and $\sigma_{u}^2 = 0$,  $\eta$ in \eqref{eq:thsnr_nodirect} and \eqref{eq:opt}  is equivalent to the channel gain $\beta_{cas}$.  Thus, we use $\eta$ to denote the gain of the channel estimates.  Furthermore, we have $ \eta  \frac{\text { a.s. }}{N \rightarrow \infty} \frac{\beta_{cas}}{N}
\sum_{i=1}^{N}e^{-\left( {\sigma_{B S}^{2}+\sigma_{U E}^{2}}\right) (t-i)}$, 
{which shows that the impact of the additive noise during channel estimation with variance $\sigma_{u}^{2}$ gradually vanishes as $N \rightarrow \infty $.} Meanwhile, we observe that $\eta$ is increasing with $N$, implying that the system can tolerate stronger phase noise as $N$ increases. The proof is omitted due to lack of space.
%
Finally, by substituting \eqref{eq:thsnr_nodirect} and \eqref{eq:opt} to \eqref{eq:capacity} we observe the ergodic communication rate with random and optimized IRS, respectively.
%

\begin{remark}
	Note that the entries of the channel estimates $\hat{{\mathbf{h}}}_{t}$ in Lemma \ref{lemma:1} are correlated due to the phase noise, since $\mathbf{\Psi}_{t}$ in \eqref{eq:covch} is not a diagonal matrix. However, the variance of the phase noise $\sigma_{BS}^2$ and $\sigma_{UE}^2$ is usually small in practice, which makes the entries on the main diagonal of $\mathbf{\Psi}_{t}$ dominant compared to the other entries. Therefore, for simplicity, we ignore the correlation of the channel estimates in this work. We consider Theorem \ref{theorem} as a reasonable lower bound for the averaged SNR, while an exact analysis considering the correlation will be introduced in our future work.
\end{remark}

\section{Simulation results}
\label{sec:results}
In this section, the system performance in terms of ergodic rate is presented when the proposed channel estimates are applied for downlink data transmission. More specifically, we study how the additive noise during uplink channel estimation and the multiplicative phase noise affect the system performance.

Throughout the simulation, we set the center frequency $f_c = 2.5 \text{ GHz}$, the symbol time interval $T_s = 10^{-7} $ s. Additionally, the number of the BS antennas is set to 16, and the length of the coherence block is $T = 500$. Furthermore, we assume a transmit power $P = 30$ dBm and a noise variance ${\sigma}^2_{\mathrm{d}} = -80$ dBm. The path loss parameter of $\beta_{cas}$ is modeled as 
$
\beta_{cas} = C_0{ \left( \frac{d_{cas}}{D_0}\right) }^{-\alpha},
$
where we set the reference path loss $C_0=-30$ dB, the path distance $d_{cas}=100$ m, and the path loss factor $\alpha$ to 2. The markers in the following figures are the theoretical results according to Theorem \ref{theorem}, while the curves are the simulation results. As described in Remark 1, we also ignore the correlation between the channel estimates in our simulations. We simulate the channel estimate $\hat{{\mathbf{h}}}$ that follows complex Gaussian distribution with covariance $\eta\mathbf{I}_{MN}$. 

\begin{figure*}
	\centering
	\includegraphics[width=0.9\linewidth]{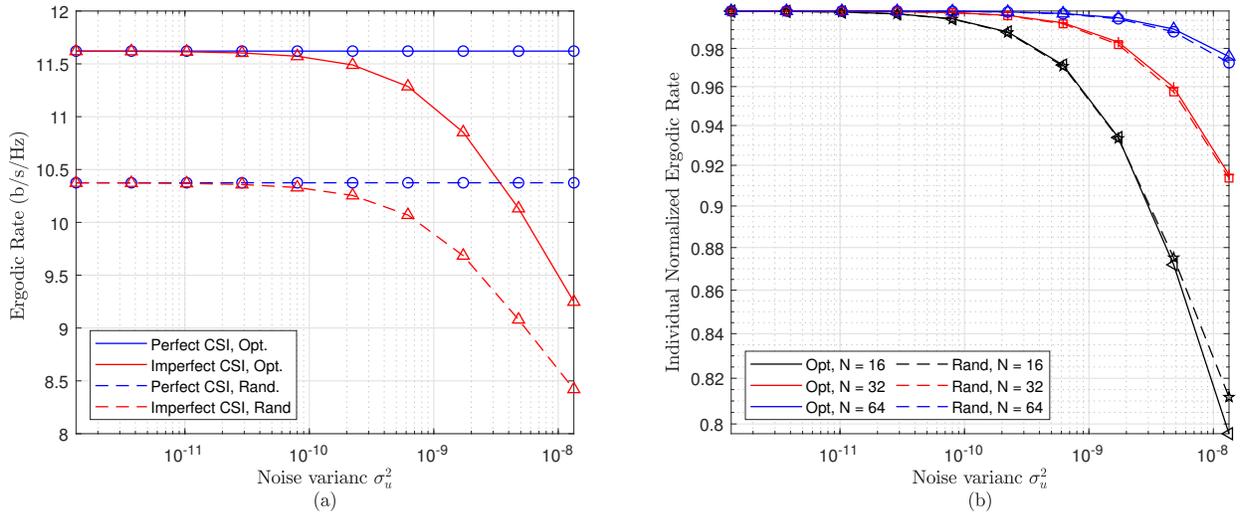}
	\caption{Performance comparison between random and optimized IRS against noise variance $\sigma_u^2$ for $M = 16$, $\zeta_{BS} = \zeta_{UE} = 0$, (a) Ergodic rate with optimized and random IRS for $N=16$, (b) Individual normalized ergodic rate with optimized and random IRS}
	\label{fig:wsafig1}
\end{figure*}

\begin{figure*}
	\centering
	\includegraphics[width=0.9\linewidth]{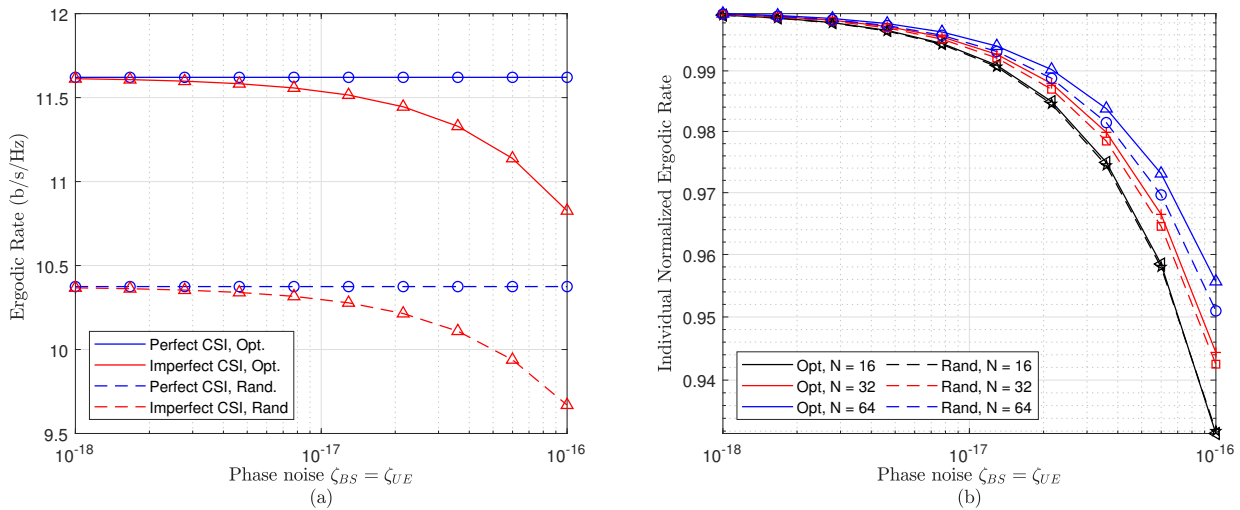}
	\caption{Performance comparison between random and optimized IRS against the phase noise parameter $\zeta_{BS}$ and $ \zeta_{UE}$ for $M = 16$, $\frac{\beta_{cas}}{\sigma^2_u} = 20$ dB, (a) Ergodic rate with optimized and random IRS for $N=16$, (b) Individual normalized ergodic rate with optimized and random IRS}
	\label{fig:wsafig2} \
\end{figure*}

%

 We first investigate the impact of the additive noise during the uplink under the assumption of perfect hardware. \figref{fig:wsafig1} (a) plots the ergodic rate with random (Rand.) and optimized (Opt.) IRS utilizing the proposed MMSE channel estimates. Meanwhile, the ergodic rate under the MMSE channel estimates and perfect CSI are compared. We notice that the simulation results agree perfectly with the analytical expressions.  Furthermore, the ergodic rate decreases with increasing $\sigma_{u}^2$. This is because $\sigma_{u}^2$ affects the gain of the channel estimates $\eta$ as introduced in Sec. \ref{sec:rate}.   We also find that the system with optimized IRS performs better than that with random IRS. To study the influence of the number of reflective elements $N$ on system performance, we plot the individual normalized ergodic rate in \figref{fig:wsafig1} (b), which is computed by dividing the ergodic rate under MMSE estimate by it under perfect CSI. It can be clearly seen that the impact of noise becomes weaker as $N$ increases for both random and optimized IRS. An interesting observation is that the system with random IRS is slightly more robust to the additive noise when $N$ is small, i.e. $N=16$, while as $N$ increases, the system with optimized IRS becomes more robust than with random IRS.  
 		
Next, we study the effects of phase noise on the ergodic communication rate. In \figref{fig:wsafig2} (a), we compare the ergodic rate of the optimized and random IRS with respect to the phase noise parameter. It can be seen that the communication rate under imperfect CSI approaches the rate of the perfect CSI as the phase noise becomes smaller.  Furthermore, we find that the system with optimized IRS still outperforms the system with random IRS when phase noise is taken into account. The individual normalized ergodic rate as a function of the phase noise parameter is shown in \figref{fig:wsafig2} (b), from which we see that the impact of phase noise decreases with increasing $N$. Moreover, with increasing $N$, the robustness of the system with optimized IRS to phase noise becomes higher than that of the system with random IRS.

\section{Conclusion}
\label{sec:conclusion} 
In this work, we have investigated the effects of phase noise in an IRS-assisted MISO communication system. We proposed a MMSE-based channel estimation algorithm that takes into account the phase noise caused by the imperfect local oscillator. Using the proposed MMSE estimates, we studied the ergodic downlink rate numerically and analytically, where we find that the robustness to the phase noise increases as the number of reflecting elements increases. We have also shown that the influence of additive noise during uplink channel estimation vanishes as $N$ approaches infinity.

\appendix
\section{Appendices}
\subsection{Proof of Lemma 1}
\label{sec:A1}
According to \cite{Kay:2012069} the MMSE estimator of the effective channel is given by
\begin{align}
\label{eq:ap1}
\hat{{\mathbf{h}}}_t = \mathbb{E}\left[ \tilde {\mathbf{h}}_{t} \boldsymbol{\psi}^{\mathrm{H}}\right] \left(\mathbb{E}\left[ \boldsymbol{\psi} \boldsymbol{\psi}^{\mathrm{H}}\right] \right)^{-1} \boldsymbol{\psi}.
\end{align}
where the first term is 
\begin{align}
\label{eq:ap3}
& \mathbb{E}\left[ \tilde {\mathbf{h}}_{t} \boldsymbol{\psi}^{\mathrm{H}}\right]  \nonumber \\ 
\stackrel{(a)}{=}&\mathbb{E}\left[  (\mathbf{I}_{N} \otimes \mathbf{D}_t ) {\mathbf{h}} {\mathbf{h}}^H ({\boldsymbol{\Phi}}^H \otimes \mathbf{I}_{M}) \right.  \nonumber \\ 
& \left. \qquad \qquad \qquad \qquad \diag\left[x^{*}_{{\tau}_1} \mathbf{D}^H_{{\tau}_1},..., x^{*}_{{\tau}_B}, \mathbf{D}^H_{{\tau}_B}\right] \right]  \nonumber \\
=& \beta_{cas}\mathbf{I}_{MN} \mathbb{E}\left\{  ({\boldsymbol{\Phi}}^H \otimes \mathbf{I}_{M})   \right.  \nonumber \\ 
& \left. \qquad  \qquad \qquad \diag\left[x^{*}_{{\tau}_1} \mathbf{D}_t \mathbf{D}^H_{{\tau}_1},..., x^{*}_{{\tau}_B} \mathbf{D}_t \mathbf{D}^H_{{\tau}_B}\right] \right\}
\nonumber\\
\stackrel{(b)}{=}&\beta_{cas}\mathbf{I}_{MN}( {\boldsymbol{\Phi}}^H \otimes \mathbf{I}_{M}) 
\diag\left[ x^{*}_{{\tau}_1} e^{-\frac{\sigma_{BS}^{2}+\sigma_{UE}^{2}}{2}\left|t-\tau_{1}\right|} \mathbf{I}_{M} \right.  \nonumber \\ 
& \left. \qquad \qquad \qquad  \qquad \ldots, x^{*}_{{\tau}_B} e^{-\frac{\sigma_{BS}^{2}+\sigma_{UE}^{2}}{2}\left|t-\tau_{B}\right|} \mathbf{I}_{M}\right] 	\nonumber\\
=&(\beta_{cas}{\boldsymbol{\Phi}}^H \tilde{\mathbf{D}})  \otimes \mathbf{I}_{M},
\end{align}	
in which (a) exploits the fact that the additive noise is uncorrelated with the channel $\mathbf{h}$, and (b) utilizes $\mathbb{E} \left[  e^{j \theta_{t1,m}} e^{j \theta_{t2,m}} \right]   = e^{-\frac{\sigma_{BS}^{2}+\sigma_{UE}^{2}}{2}\left|t1-t2\right|} $. $\tilde{\mathbf{D}}$ in \eqref{eq:ap3} is given in \eqref{eq:Dtilde}.
Furthermore, we have
\begin{align}
\label{eq:ap4}
&\quad\mathbb{E}\left[ \boldsymbol{\psi} \boldsymbol{\psi}^{\mathrm{H}}\right]  \nonumber\\
& = \mathbb{E} \left\{  \diag\left[ x_{{\tau}_1} \mathbf{D}_{{\tau}_1},..., x_{{\tau}_B} \mathbf{D}_{{\tau}_B}\right] 
\left( \boldsymbol{\Phi} \otimes \mathbf{I}_{M} \right) \mathbf{h} \mathbf{h}^H  \right.  \nonumber \\ 
& \left.({\boldsymbol{\Phi}}^H \otimes \mathbf{I}_{M}) \diag\left[x^{*}_{{\tau}_1} \mathbf{D}^H_{{\tau}_1},..., x^{*}_{{\tau}_B} \mathbf{D}^H_{{\tau}_B}\right] \right\} +\sigma^{2}\mathbf{I}_{MB} \nonumber \\
& \stackrel{(a)}{=}  \boldsymbol{\Omega} \otimes \mathbf{I}_{M}
,\end{align}
where (a) is because of the orthogonality of the DFT matrix, and the $(i,j)$-th element of $ \boldsymbol{\Omega}$ is given as
\begin{align}
\label{eq:xtilde}
\left[{\boldsymbol{\Omega}} \right]_{i, j}=\left\{\begin{array}{ll}  N\beta_{cas}\left|x_{\tau_{i}}\right|^{2}+\sigma^2_u, & i=j, \\0 , & i \neq j.\end{array}\right .
\end{align}
For simplicity, we let $\left|x_{\tau_{i}}\right|^{2}=1$. Then, plugging \eqref{eq:ap3} and \eqref{eq:ap4} into \eqref{eq:ap1}, we get the MMSE estimator given as
\begin{align}
\hat{{\mathbf{h}}}_{t} &=((\beta_{cas}{\boldsymbol{\Phi}}^H \tilde{\mathbf{D}})  \otimes \mathbf{I}_{M}) {\left( \boldsymbol{\Omega} \otimes \mathbf{I}_{M}\right)  }^{-1}\boldsymbol{\psi} \nonumber \\
& = ((\beta_{cas}{\boldsymbol{\Phi}}^H \tilde{\mathbf{D}})  \otimes \mathbf{I}_{M})  {\left( \boldsymbol{\Omega}^{-1} \otimes \mathbf{I}_{M}\right)  }\boldsymbol{\psi} \nonumber \\
& =  \frac{\beta_{cas}}{N\beta_{cas}+\sigma_{\mathrm{u}}^2} \left( (   {\boldsymbol{\Phi}}^H \tilde{\mathbf{D}} )    \otimes \mathbf{I}_{M}\right) \boldsymbol{\psi}.
\end{align}
Moreover, the covariance matrix of the channel estimate is given as
\begin{align}
\label{eq:Psi}
\mathbf{\Psi}_{t} &= \mathbb{E}\left[ \hat{{\mathbf{h}}}_{t} \hat{{\mathbf{h}}}_{t}^H \right]  \nonumber \\
&=\mathbb{E}\left[ \tilde {\mathbf{h}}_{t} \boldsymbol{\psi}^{\mathrm{H}}\right] \left(\mathbb{E}\left[ \boldsymbol{\psi} \boldsymbol{\psi}^{\mathrm{H}}\right] \right)^{-1} {\left( \mathbb{E}\left[ \tilde {\mathbf{h}}_{t} \boldsymbol{\psi}^{\mathrm{H}}\right] \right)}^H  \nonumber \\
&=\beta_{cas}^2\left(  ( {\boldsymbol{\Phi}}^H \tilde{\mathbf{D}} {\boldsymbol{\Omega}}^{-1} )  \otimes \mathbf{I}_{M}\right) \left(  ( \tilde{\mathbf{D}}^H{\boldsymbol{\Phi}}  )  \otimes \mathbf{I}_{M}\right)   \nonumber \\
&=\frac{\beta_{cas}^2}{N\beta_{cas}+\sigma_{\mathrm{u}}^2} \left({\boldsymbol{\Phi}}^H \tilde{\mathbf{D}} \tilde{\mathbf{D}}^H {\boldsymbol{\Phi}} \right)\otimes \mathbf{I}_{M}.
\end{align}    
The corresponding estimation error covariance is given as
\begin{align}
\mathbf{C}_t &= \mathbb{E}\left[ \Delta{{\mathbf{h}}}_{t} \Delta{{\mathbf{h}}}_{t}^H \right] \nonumber \\ 
&=\mathbb{E}\left[ \tilde{{\mathbf{h}}}_{t} \tilde{{\mathbf{h}}}_{t}^H \right] -\mathbb{E}\left[ \hat{{\mathbf{h}}}_{t} \hat{{\mathbf{h}}}_{t}^H \right]    \nonumber \\ 
&=\beta_{cas} \mathbf{I}_{MN}- \frac{\beta_{cas}^2}{N\beta_{cas}+\sigma_{\mathrm{u}}^2} \left({\boldsymbol{\Phi}}^H \tilde{\mathbf{D}} \tilde{\mathbf{D}}^H {\boldsymbol{\Phi}} \right)\otimes \mathbf{I}_{M}.
\end{align}
\subsection{Proof of Theorem 1}
\label{sec:A2}
Based on \eqref{eq:SNR_ins} and \eqref{eq:precoding} the averaged SNR is given as
\begin{align}
\label{eq:B1}
\mathbb{E}\left[\hat{\gamma}_{t}\right]&= \frac{P}{\sigma_{d}^{2}}\mathbb{E} \left[ \left|\left(\mathbf{D}_{t}\tilde{ \mathbf{H}}_{c a s} \boldsymbol{\phi}_{t}^{\mathrm{dl}}\right)^{T} \boldsymbol{w}_{t}\right|^{2}\right] \nonumber \\
&=\frac{P}{\sigma_{d}^{2}}\left(\mathbb{E}\left[\left\|\hat{\mathbf{H}}_{c a s, t} \boldsymbol{\phi}_{t}^{\mathrm{dl}}\right\|^{2}\right]\right)^{-1} \nonumber \\
&\quad \qquad  \mathbb{E}\left[\left|  \left(\tilde{\mathbf{H}}_{c a s, t} {\boldsymbol{\phi}}_{t}^{\mathrm{dl}}\right)^T\left(\hat{\mathbf{H}}_{c a s, t} {\boldsymbol{\phi}}_{t}^{\mathrm{dl}}\right)^{*}\right| ^2 \right ],
\end{align}
where the first term of \eqref{eq:B1} for random IRS is simplified as
\begin{align}
\label{eq:B2}
&\mathbb{E}\left[\left\|\hat{\mathbf{H}}_{c a s, t} \boldsymbol{\phi}_{t}^{\mathrm{dl}}\right\|^{2}\right] \nonumber\\ 
= &\mathbb{E}\left[\left\|\left( \left(\phi_{t}^{d l}\right)^{T} \otimes \mathbf{I}_{M}\right) \hat{{\mathbf{h}}}_{t} \right\|^{2}\right]= \mathbb{E}\left[\left\| \hat{{\mathbf{h}}}_{t} \right\|^{2}\right]\nonumber\\
\stackrel{(a)}{=} & \frac{MN(\beta_{cas})^2}{N\beta_{cas}+\sigma_{u}^{2}}   \sum_{i=1}^{N}e^{-\left( {\sigma_{B S}^{2}+\sigma_{U E}^{2}}\right) \left|t-i\right|} \nonumber\\ 
\stackrel{(b)}{=} & MN\eta,
\end{align}
where (a) follows computing $\tr(\mathbf{\Psi}_{t})$, and (b) is by introducing $\eta =\frac{(\beta_{cas})^2}{N\beta_{cas}+\sigma^{2}}   \sum_{i=1}^{N}e^{-\left( {\sigma_{B S}^{2}+\sigma_{U E}^{2}}\right) \left|t-i\right|} $ for readability. 
Next, we derive $\mathbb{E}\left[\left\|\hat{\mathbf{H}}_{c a s, t} \boldsymbol{\phi}_{t}^{\mathrm{dl}}\right\|^{2}\right]$ for optimized IRS. By simulations we find that averaged SNR with optimized IRS according \eqref{eq:optIRS} is the same with $\boldsymbol{\phi}_{t,opt}^{\mathrm{DL}} =\exp \left(j \angle\left(\hat{\mathbf{h}}_{cas,t}^{H}\right)\right)$, where $\hat{\mathbf{h}}_{cas,t}$ denotes any row of $\tilde{\mathbf{H}}_{cas,t}$. This is also observed in \cite{8811733}. Therefore, we derive the average SNR with optimized IRS according to $\boldsymbol{\phi}_{t,opt}^{\mathrm{DL}} =\exp \left(j \angle\left(\hat{\mathbf{h}}_{cas,t}^{H}\right)\right)$ in the following.   
We use $\hat{\mathbf{H}}_{cas, t}^{\prime}$ to denote a $(M-1) \times N $ submatrix of $\hat{\mathbf{H}}_{cas, t} $ except for the row $\hat{\mathbf{h}}_{{cas}, t}$, then we have
\vspace{-0.5em}
\begin{align}
\label{eq:B2_opt}
\mathbb{E}&\left[\left\|\hat{\mathbf{H}}_{c a s, t} \boldsymbol{\phi}_{t,opt}^{\mathrm{dl}}\right\|^{2}\right] \nonumber\\ 
\stackrel{(a)}{=}&\mathbb{E}\left[\left|\hat{\mathbf{h}}_{\text {cas }, t} \boldsymbol{\phi}_{t, \text { opt }}\right|^{2}+\left\|\hat{\mathbf{H}}_{c a s, t}^{\prime} \boldsymbol {\phi}_{t, o p t}\right\|^{2}\right] \nonumber\\
=&\mathbb{E}\left[\left(\sum_{n=1}^{N}\left|\hat{\mathbf{h}}_{\text {cas }, t}(n)\right|\right)^{2}\right]+\mathbb{E}\left[\left\|\hat{\mathbf{H}}_{c a s, t}^{\prime} \boldsymbol {\phi}_{t, o p t}\right\|^{2}\right] \nonumber\\
\stackrel{(b)}{=}& N^{2} \frac{\pi \eta }{4}+(M-1) N \eta,
\end{align}
where (a) is exploiting the property of the vector norm, and the first term of (b) is because $\tilde{\mathbf{h}}_{{cas,t}}(n)$ follows Rayleigh distribution. While the second term of (b) is derived following same approach as used in \eqref{eq:B2}, since $\boldsymbol {\phi}_{t, o p t}$ is only related to $\hat{\mathbf{h}}_{{cas}, t}$, it can be considered as a random vector in $\mathbb{E}\left[\left\|\hat{\mathbf{H}}_{c a s, t}^{\prime} \boldsymbol {\phi}_{t, o p t}\right\|^{2}\right] $.  Then, the last term of \eqref{eq:B1} with random IRS is obtained as
\begin{align}
\label{eq:B3}
&\mathbb{E}\left[\left|  \left(\tilde{\mathbf{H}}_{c a s, t} \boldsymbol{\phi}_{t}^{\mathrm{dl}}\right)^T\left(\hat{\mathbf{H}}_{c a s, t} \boldsymbol{\phi}_{t}^{\mathrm{dl}}\right)^{*}\right| ^2 \right ] \nonumber \\
= &\mathbb{E}\left[\left|  \left( \left( \hat{\mathbf{H}}_{c a s, t}+ \Delta{\mathbf{H}}_{c a s, t} \right)  {\boldsymbol{\phi}}_{t}\right)^T\left(\hat{\mathbf{H}}_{c a s, t} {\boldsymbol{\phi}}_{t}\right)^{*}\right| ^2 \right ] \nonumber \\
= &\mathbb{E}\left[\left|  \left(  \hat{\mathbf{H}}_{c a s, t}  {\boldsymbol{\phi}}_{t}\right)^T\left(\hat{\mathbf{H}}_{c a s, t} {\boldsymbol{\phi}}_{t}\right)^{*}\right| ^2 \right ] \nonumber \\ 
& +  \mathbb{E}\left[\left|  \left( \Delta{\mathbf{H}}_{c a s, t} {\boldsymbol{\phi}}_{t}\right)^T\left(\hat{\mathbf{H}}_{c a s, t} {\boldsymbol{\phi}}_{t}\right)^{*}\right| ^2 \right ] \nonumber \\
=& {\left( M N {\eta}\right) }^2+M N^2 \beta_{cas} \eta-M(N \eta)^{2}, \vspace{-1em}
\end{align} 
where the last equality follows from algebraic computation of the two independent variables. Similarly, for optimized IRS we have 
\begin{align}
\label{eq:B3_opt}
&\mathbb{E}\left[\left|  \left(\tilde{\mathbf{H}}_{c a s, t} {\boldsymbol{\phi}}_{t,opt}\right)^T\left(\hat{\mathbf{H}}_{c a s, t} {\boldsymbol{\phi}}_{t,opt}\right)^{*}\right| ^2 \right ] \nonumber \\
%
%
%
%
= &  \left(N^{2} \frac{\pi \eta }{4}+(M-1) N \eta \right)^2\nonumber \\
& +\tr \left( N(\beta_{1}\beta_{2}-\eta)\diag \left( \left[\frac{N^2\pi}{4}\eta, N\eta, ..., N\eta\right] \right) \right)  \nonumber \\
= &  \left(N^{2} \frac{\pi \eta }{4}+(M-1) N \eta \right)^2\nonumber \\
&+  N(\beta_{1}\beta_{2}-\eta)\left(\frac{N^2\pi}{4}\eta+(M-1)N\eta \right).
\end{align}
Finally, by plugging \eqref{eq:B2} \eqref{eq:B3} into \eqref{eq:B1}, and \eqref{eq:B2_opt} \eqref{eq:B3_opt} into \eqref{eq:B1} we observe theorem 1.  \vspace{-1em}
%
\subsection{Proof of $\Delta \eta \geq 0$}
\label{sec:A3}
We observe the distance of $\eta$ between $N$ and $N-1$ as 
\begin{align*}
\Delta \eta = &   \frac{\beta_{cas}}{N} \sum_{i=1}^{N}e^{-\left( {\sigma_{B S}^{2}+\sigma_{U E}^{2}}\right) (t-i)} \nonumber \\ - &\frac{\beta_{cas}}{N-1} \sum_{i=1}^{N-1}e^{-\left( {\sigma_{B S}^{2}+\sigma_{U E}^{2}}\right) (t-i)} \nonumber \\
\geq& \frac{\beta_{cas}e^{-\left( {\sigma_{B S}^{2}+\sigma_{U E}^{2}}\right) t}}{N(N-1)}  \left( (N-1) e^{\left( {\sigma_{B S}^{2}+\sigma_{U E}^{2}}\right)N} \right.-\nonumber \\& \left.(N-1)e^{\left( {\sigma_{B S}^{2}+\sigma_{U E}^{2}}\right)(N-1)}\right)\geq  0,
\end{align*}
where the equality holds when $\sigma_{B S}^{2}=\sigma_{U E}^{2}=0$. Thus, $\eta$ is increasing with $N$.
\bibliographystyle{IEEEbib}
\bibliography{refs}
\end{document}